\newcommand{\eg}{{\it e.g., }}
\newcommand{\ie}{{\it i.e., }}
\newcommand{\reals}{{\mathbb{R}}}
\newcommand{\rank}{{\mathrm{rank}\,}}
\newcommand{\normrank}{{\mathrm{normalrank}\,}}
\newtheorem{defn}{Definition}
\newtheorem{thm}{Theorem}
\newtheorem{rem}{Remark}
\newtheorem{prop}{Proposition}
\newtheorem{lem}{Lemma}
\newtheorem{cor}{Corollary}
\title{\LARGE \bf
From Control System Security Indices to Attack Identifiability
}
\author{Henrik Sandberg and Andr\'e~M.H.~Teixeira
\thanks{This work was supported in part by the Swedish Research Council (grant~2013-5523), the Swedish Civil Contingencies Agency through the CERCES project, and the EU 7th Framework Programme (FP7/2007-2013, grant agreement n$^\circ$ 608224).}
\thanks{H.~Sandberg is with the Department of Automatic Control, School of Electrical Engineering, KTH Royal Institute of Technology, Stockholm, Sweden. Email: {\tt hsan@kth.se}}
\thanks{A.M.H.~Teixeira is with the Faculty of Technology, Policy and Management, Delft University of Technology, Delft, the Netherlands. Email: {\tt andre.teixeira@tudelft.nl}}
}
\begin{document}
\maketitle
\thispagestyle{empty}
\pagestyle{empty}

\begin{abstract}
In this paper, we investigate detectability and identifiability of attacks on linear dynamical systems that are subjected
to external disturbances. We generalize a concept for a security index, which was previously introduced for
static systems. The generalized index exactly quantifies the resources necessary for targeted attacks to be undetectable and unidentifiable in the presence of disturbances. This information is useful for both risk assessment and for the design of anomaly detectors.
Finally, we show how techniques from the fault detection literature can be used to decouple disturbances and to identify attacks, under certain sparsity constraints.
\end{abstract}

\section{Introduction}
\label{sec:intro}
As modern control systems increasingly rely on information and communication technology (ICT) infrastructures to exchange real-time measurements and actuator signals, their exposure to malicious cyber threats also grows: each measurement and actuator signal may be compromised and altered by a skillful cyber adversary.
Therefore, cyber security and resilience with respect to attacks are important properties of modern control systems that are tightly coupled to ICT infrastructures.

Some of the main challenges in designing cyber-secure control systems are related to: analyzing the risk of cyber attacks; devising protection mechanisms to prevent and remove high-risk threats; and also to timely detect and mitigate on-going attacks. While the first two challenges relate to conventional ICT cyber security approaches (\ie risk management~\cite{teixeira+15}), the third approach is closely related to the well-known control field of fault diagnosis. Although both relate to detecting anomalies, there exist subtle differences between classical fault diagnosis and attack detection in cyber security.
Classical control-theoretic approaches to anomaly detection (\eg fault detection, isolation, and identification) typically deal with independent disturbances and faults; thus they typically do not consider possibly colluding malicious cyber attacks, which may even attempt to hide the attacks by mimicking physical disturbances and faults. In fact, this paper addresses the latter scenario, discussing detectability conditions of sparse attacks that may be masked by plausible disturbances, and connecting the results to fundamental limitations well-known in the controls literature, in terms of fault detection and identification~\cite{ding08} and input reconstruction~\cite{patton+99}.

The topic of cyber-secure control systems has been receiving increasing attention recently. An overview of existing cyber threats and vulnerabilities in networked control systems is presented in \cite{cardenas+08,Teixeira_Automatica2015}. Rational adversary models are highlighted as one of the key items in security for control systems, thus making adversaries endowed with intelligence and intent, as opposed to faults. Therefore, these adversaries may exploit existing vulnerabilities and limitations in the traditional anomaly detection mechanisms and remain undetected, or indistinguishable from disturbances and process noise. In fact,~\cite{pasqualetti+13} uses such fundamental limitations to characterize a set of undetectable attack policies for networked systems modeled by differential-algebraic equations. Related undetectable attack policies were also considered in \cite{Teixeira_Automatica2015,smith+15}. A common thread within these approaches is that undetectable attacks are constrained to be entirely decoupled from the anomaly detector's output.

Detectability conditions of undetectable false-data injection attacks to control systems are closely examined in~\cite{Teixeira_Allerton2012}, where it is shown that mismatches between the system's and the attack policy's initial conditions may lead to detectable attacks. Additionally, modifications to the system dynamics, input, and output matrices that reveal stealthy data attacks were also characterized.

Other work has analyzed undetectable attacks with respect to the amount of effort they require, \ie the number of attack signals that must be injected by the adversary to remain undetected. As discussed in~\cite{teixeira+15}, such analysis provides insight into the likelihood of such attacks occurring, which is a core component of determining the risk (\ie impact and likelihood) of such threat scenarios.

For static systems,~\cite{sandberg+10} first proposed a security index for measurement attacks, which corresponds to the minimum number of measurements that need to be corrupted as to ensure undetectability. The computation of the security index involves solving an NP-hard problem, in general, which has later been investigated by~\cite{sou+12,hendrickx+14,kosut14,yamaguchi+15}.
Under certain structures of the problem, this work proposed efficient algorithms to compute the security index in polynomial time.

Related problems have been investigated for dynamical systems. The work in~\cite{fawzi+14} characterizes the number of corrupted sensor channels that cannot be detected during a finite time-interval. For sensor attacks that can be detected, a resilient state estimation scheme inspired by compressed sensing is proposed. The work in~\cite{sundaram+11} explored the notion of strong observability to characterize the conditions for which the initial state can be recovered under the presence of sparse unknown input signals. For sensor attack scenarios, \cite{chen+15} determines the
smallest number of sensors needed for undetectable attacks. The notion of security index for dynamical system under sensor and actuator attacks was also extended to dynamical systems at steady-state and for finite-time intervals in~\cite{teixeira+13}.

This work investigates the notion of security index for dynamical systems under both attacks and disturbances. In particular, we consider the case where attacks are said to be undetectable if they can be masked (explained) by a disturbance signal. The formulation of the security index is related to well-known limitations in the fault detection literature, and the complexity of computing these indices for special cases is discussed and related to the literature. For detectable attacks, the concept of identifiable attacks is defined, as well as a weaker notion of identifiability where only certain entries of the attack signal can be uniquely determined. Connections between these definitions and the security index are investigated, based on which attacks with sufficiently high sparsity are shown to be identifiable. Finally, for identifiable attacks, an attack reconstruction procedure is proposed.

The outline of this paper is as follows. The dynamical system under the influence of disturbances and attacks is described in Section~\ref{sec:prel}, where undetectable attacks, potentially masked by disturbances, are defined. Section~\ref{sec:index} formulates the security index for dynamical systems under the influence of both disturbances and attacks, and discusses important special cases and their connection to the literature.
The role of security indices in (possibly partial) attack identification under disturbances is examined in Section~\ref{sec:identification}, whereas concluding remarks are given in Section~\ref{sec:conclusion}.

\subsubsection*{Notation}
For a set $I$, $|I|$ denotes its cardinality. For a vector $a\in\mathbb{C}^m$, we denote
its $i$-th element by $a_i$. By $a^i \in \mathbb{C}^m$, we mean a vector whose $i$-th element is non-zero, \ie $a_i \neq 0$, and the other elements are arbitrary.
The support of $a\in\mathbb{C}^m$, $\text{supp}(a)$, is the set of indices $i$ where $a_i \neq 0$, and $\|a\|_0:=|\text{supp}(a)| $ is the number of non-zero elements in $a$. Similar notations are used for discrete-time signals $a$, where $a(k)\in\mathbb{C}^m$, $k=0,1,2,\ldots$ A discrete-time linear system $G\in\mathcal{R}_p^{p\times m}(z)$ has a rational proper transfer matrix $G(z)$ of dimension $p \times m$. We also define $\normrank [G(z)] := \max_z \rank [G(z)]$.

\section{Preliminaries}
\label{sec:prel}
Let us consider the discrete-time system $y=G_d d + G_a a$, $G_d\in\mathcal{R}_p^{p\times o}(z)$ and $G_a\in\mathcal{R}_p^{p\times m}(z)$, with a realization
\begin{equation}
\begin{aligned}
x(k+1) & = Ax(k) + B_d d(k) + B_a a(k) \\
y(k) & = Cx(k) + D_d d(k) + D_a a(k),
\end{aligned}
\label{eq:G}
\end{equation}
for times $k =0,1,2\ldots$
Here $x(k)\in \reals^n$ is the state vector, $d(k)\in\reals^o$ are unknown disturbance (or fault) signals, $a(k)\in\reals^{m}$ are potential attack signals, and $y(k)\in\reals^{p}$ are the measurements available to the operator of the system.
Additionally, we assume to have distinct measurement, attack, and disturbances signals, in the sense that
\begin{equation*}
\begin{aligned}
\rank\begin{bmatrix}
B_d  \\
D_d
\end{bmatrix} &= o,\quad
\rank\begin{bmatrix}
B_a \\
 D_a
\end{bmatrix} = m, \quad
\rank \left[C\right] = p.
\end{aligned}
\end{equation*}
It turns out that the value of the initial state, $x(0)$, is important in the following, but initially we will let it be a free variable.

The system model \eqref{eq:G} is similar to those studied in the fault detection and diagnosis literature, see, \eg \cite{patton+99,ding08}. The signals $d$ and $a$ represent different types of anomalies that can occur in the system, although of different nature. We next want to determine when we can detect and distinguish between these anomalies.
We could think of $d$ as natural disturbances, or faults, that are to be expected, and that have no malicious intent. They could represent measurement and process noise, for example. One important aspect is that a malicious attacker could use such disturbances to hide his or her attack $a$ from being seen in the output $y$. We will typically let $d$ be a free variable, where the only available knowledge about the disturbance signals amounts to their signature matrices, $B_d$ and $D_d$. Thus, to ensure robustness with respect to disturbances, an anomaly detection algorithm wishing to detect potential attack signals must be designed so that it is decoupled from $B_d$ and $D_d$.
Under this disturbance model, we check whether a disturbance exists that will "mask" the attack. If this is the case, the operator is not able to distinguish between attacks and disturbances, and cannot conclude whether an attack is present, or not.

\begin{rem}
Naturally, several other disturbance models exist, such as assuming known upper bounds on the disturbance signal's energy or instantaneous peak, or constraining the disturbance to belong to a given class of signals, \eg constant or sinusoidal signals. In particular, the results in this paper can be straightforwardly extended to disturbances modeled as the output of an autonomous discrete-time system
\begin{equation*}
\begin{aligned}
x_d(k+1) & = A_d x_d(k) \\
d(k) & = C_d x_d(k),
\end{aligned}
\label{eq:Disturbance}
\end{equation*}
which is parametrized by a free initial condition $x_d(0)$.
\end{rem}

The attack can potentially occur in $m$ different locations in the system ($a(k)\in\mathbb{R}^m$), and we will be concerned about the possibility for the operator with access to the above model and the signal $y$ to detect an attack signal $a\neq 0$. We make the following definitions to formalize these ideas.

\begin{defn}
An attack signal $a$ is \emph{persistent} when $a(k) \not \rightarrow 0$ as $k\rightarrow \infty$.
\end{defn}

In this paper, we are mainly concerned with persistent attacks, since they have non-vanishing impact.

\begin{defn}
A (persistent) attack signal $a$ is
\begin{itemize}
\item[(i)] \emph{undetectable} if there exists a simultaneous (masking) disturbance signal $d$ and initial state $x(0)$ such that $y(k)=0$, $k\geq 0$;
\item[(ii)] \emph{asymptotically undetectable} if there exists a simultaneous (masking) disturbance signal $d$ and initial state $x(0)$ such that $y(k)\rightarrow 0$, $k\rightarrow \infty$.
\end{itemize}
\end{defn}

Note that the definition of undetectable attacks is the same as in~\cite{pasqualetti+13}, if we assume there are no disturbances in the system~\eqref{eq:G}. The reason for calling the disturbance "masking" comes from linearity of the system: If $0=G_a a + G_d d$, then clearly $y=G_a a = -G_d d$, and it is impossible to in the output distinguish between the undetectable attack and the masking disturbance, if they occur by themselves without the other.

We will next be interested in quantifying the minimal resources needed by the attacker to achieve undetectability, when he or she want to target a specific attack element $a_i$, $i\in\{1,\ldots,m\}$. Hence, we will search for sparse signals $a^i$ satisfying the above conditions.

\section{The Dynamical Security Index}
\label{sec:index}

For an attack signal $a$ to be undetectable, we need to ensure there exists a masking disturbance $d$ and an initial state $x(0)$ resulting in zero output. Existence of such a signal can easily be checked by considering the matrix pencil (the Rosenbrock system matrix)
\begin{equation*}
P(z) = \begin{bmatrix} A-zI & B_d & B_a \\ C & D_d & D_a \end{bmatrix},
\end{equation*}
see~\cite{ZDG96}. An attack signal $a(k)=z_0^k a_0$, $a_0\in\mathbb{C}^m$, $z_0\in\mathbb{C}$, is undetectable iff there exists
$x_0\in\mathbb{C}^n$ and $d_0\in\mathbb{C}^o$ such that
\begin{equation}
P(z_0)\begin{bmatrix} x_0 \\ d_0 \\ a_0\end{bmatrix} = 0, \label{eq:z0}
\end{equation}
\ie $P(z_0)$ should not have full column rank. The undetectable attack is also persistent iff $|z_0|\geq 1$.

\begin{rem}
Note that, if the initial state $x(0)\neq x_0$, the attack signal $a(k)=z_0^k a_0$ may actually be detectable. Following the analysis in~\cite{Teixeira_Allerton2012}, if $A$ is Schur ($\rho(A)<1$), the attack signal is only \emph{asymptotically} undetectable, since there will be a vanishing transient visible in the output. This transient can be made arbitrarily small by the attacker choosing $a_0$ small. Hence, the difference between asymptotically undetectable and undetectable attacks may not be very large in practice.
\end{rem}

If the attacker would like to target the element $i$, \ie $a_i\neq 0$, and remain undetected, he or she needs to
find a vector $a_0^i\in\mathbb{C}^m$
satisfying \eqref{eq:z0}. In general, this may require the attacker to target several elements
$a_j$, $j\neq i$. To measure the minimal number of elements required to achieve this, we introduce the following \emph{security index} $\alpha_i$, which generalizes a concept first introduced for non-dynamical systems in \cite{sandberg+10}:
\begin{equation}
\begin{aligned}
\alpha_i := & \min_{|z_0|\geq 1,x_0,d_0,a_0^i} &&  \|a_0^i\|_0 \\
& \mathrm{subject \, to} & &  P(z_0)\begin{bmatrix} x_0 \\ d_0 \\ a_0^i\end{bmatrix} = 0.
\end{aligned}
\label{eq:alpha}
\end{equation}
Note that for all $i$ it holds $\alpha_i \geq 1$, and if there is no feasible solution, we define $\alpha_i = +\infty.$
Note also that this is a combinatorial optimization problem, because of the objective function $\|a_0^i\|_0$,
and in general is hard to solve~\cite{hendrickx+14}. However, in several cases of interest, it has a simple solution, as discussed below.

We can think of the signals $a^i(k)= z_0^k a_0^i$ resulting from \eqref{eq:alpha}
as the sparsest possible persistent undetectable attacks against an element $i$.
These signals should be of interest to both the operator and the attacker, in the sense that they show how the attacker can modify the solutions to the system equations~\eqref{eq:G}, without modifying the measurable output $y$. Also, if the number $\alpha_i$ is large, it indicates that it will require significant coordinated resources by the attacker to accomplish undetectable attacks against $a_i$.
An operator can thus use the
index in performing a quantitative risk assessment, as illustrated in, \eg \cite{teixeira+15}. The index $\alpha_i$ also has implications for the possibility of the operator to reconstruct ("identify") a detectable attack $a^i$, as will be further explored in Section~\ref{sec:identification}.

\begin{rem}
There are some concepts in the literature that are similar to $\alpha_i$ above.
In power system observability analysis, a related concept is that of critical $k$-tuples, see, \eg \cite{sou+12}. For sensor attack scenarios, \cite{chen+15} determines the
smallest number of sensors needed for undetectable attacks. There are also close connections to the spark of a matrix, used in compressed sensing, see, \eg \cite{donoho+03}. Also, in \cite{teixeira+13}, an optimization problem related to \eqref{eq:alpha} was studied.
Some further connections are made in the special cases considered in the following subsections.
\end{rem}

\subsection{Critical Attack Signals ($\alpha_i = 1$)}
A particularly serious situation is when $\alpha_i=1$, since the attacker then can target element $i$ undetected without the need to access any other resources. Let us denote
\begin{equation*}
\begin{aligned}
P_i(z) & = \begin{bmatrix} A-zI & B_d & B_{a,i} \\ C & D_d & D_{a,i} \end{bmatrix} \in\mathbb{C}^{(n+p)\times (n+o+1)}, \\
P_d(z) & = \begin{bmatrix} A-zI & B_d  \\ C & D_d \end{bmatrix} \in\mathbb{C}^{(n+p)\times (n+o)},
\end{aligned}
\end{equation*}
where $B_{a,i},D_{a,i}$ are the $i$-th columns of $B_a,D_a$.
If there is a $z_0\in\mathbb{C}$, $|z_0|\geq 1$, such that
\begin{equation*}
\rank [P_d(z_0)] = \rank [P_i(z_0)],
\end{equation*}
then $\alpha_i= 1$. An even more serious situation occurs when
\begin{equation}
\normrank [P_d(z)] = \normrank [P_i(z)].
\label{eq:seriousrank}
\end{equation}
If this easily checked condition is fulfilled, it is possible to find an undetectable attack signal $a^i(k)=z_0^k a_0^i $ of cardinality one, using any complex frequency $z_0$.

Note that \eqref{eq:seriousrank} holds when there are many disturbances in relation to the number of available measurements,
\ie $o\geq p$.

\subsection{Transmission Zeros}
If the Rosenbrock system matrix $P(z)$ has full column normal rank and the realization is minimal, the only solutions to \eqref{eq:z0} that exist correspond to the system's finite number of transmission zeros, see, \eg \cite{ZDG96}. Hence, to find $\alpha_i$ we only need to inspect the corresponding system zero directions. When the zero directions are all one-dimensional, the computation of $\alpha_i$ becomes especially simple. Generically, one would expect the zero directions to be one-dimension, but there are several interesting situations where this is not the case, as we shall see below (although these will be invariant zeros, and not transmission zeros).

\subsection{Sensor Attacks}
\label{sec:sensattack}
The situation where the system is subjected to sensor attacks have received particular interest in the literature, see, \eg \cite{fawzi+14,chen+15,lee+15}. In this case we have $B_d=B_a=0$, and in \eqref{eq:z0} we only need to consider $z_0\in\{\lambda_1(A),\ldots,\lambda_n(A)\}$, \ie the eigenvalues of $A$, where $x_0$ are eigenvectors of $A$. If the eigenvalues are simple, the eigenspace corresponding to each eigenvalue is one-dimensional, and again the computation of $\alpha_i$ is simplified.

As a further special case, suppose all sensors are potentially attackable and there are no disturbances, and so $D_a=I_p$ and $D_d=0$. Also suppose that
the operator has high redundancy in the system  in the sense that
the realization \eqref{eq:G} is observable using any one of the outputs $y_i$, $i\in\{1,\ldots,p\}$, by itself.
Considering the PBH test~\cite{ZDG96}, this means that any one of the eigenmodes $z_0^kx_0$ is visible in all the sensors, and all elements in the vector $C x_0$ are non-zero. Thus an undetectable attack ($Cx_0+a_0^i=0)$ must involve all the signals in $a$, and for all $i$ the security index must be $\alpha_i = m = p$ (or $\alpha_i = +\infty$ if $A$ is Schur).
Hence, one way to make undetectable attacks hard is to
install many redundant sensors, each of which with the individual power to observe the entire system state with little noise, which is in agreement with \cite{fawzi+14}.

\subsection{Sensor Attacks for Static Systems}
If we assume $A=I$, $B_d=B_a=D_d=0$ (only sensors attacked), we have essentially recovered the original security index $\alpha_i$, as defined in \cite{sandberg+10}. The motivation for the index there was to quantify the vulnerability of power system state estimators to false data injection attacks. Note that because $A=I$ and $B_d=B_a=0$, this problem only concerns systems in steady-state. Perhaps one would think that this makes the problem \eqref{eq:alpha} easier, but in fact the problem can be significantly harder in practice. This is because the dimension of the eigenspace corresponding to the sole eigenvalue is of dimension $n$, and not one-dimensional as is frequently the case in the previous examples.
Intuitively, one can understand this since the attacker here has no constraints in time to fulfill and thus has many more options for being undetectable. This fact together with the potentially high dimension $n$ in a power system has spurred several investigations on the efficient computation of $\alpha_i$. The problem in general is NP-hard~\cite{hendrickx+14}, but in the examples deriving from power systems the matrix $C$ has a useful structure that can be exploited. In particular, \cite{hendrickx+14,kosut14,yamaguchi+15} show how max-flow min-cut algorithms can be used to solve the problem in polynomial time.
Under slightly different assumptions on the structure of $C$, \cite{sou+13} shows how $\ell_1$-relaxation can also exactly solve the problem in polynomial time.

\section{Attack Identification and Decoupling}
\label{sec:identification}
In this section, we turn to the related problem of attack identification, which concerns the
possibilities to reconstruct elements of an attack signal from the measured output.

\subsection{Attack Identification}
To formalize the attack identification problem, the following definitions are made.
\begin{defn}
A (persistent) attack signal $a$ is
\begin{itemize}
\item[(i)] \emph{identifiable} if for all attack signals $\tilde a\neq a$, and all corresponding disturbances $d$ and $\tilde d$ and initial states $x(0)$ and $\tilde x(0)$, we have $\tilde y \neq y$;
\item[(ii)] \emph{asymptotically} \emph{identifiable} if for all attack signals $\tilde a(k)\not\rightarrow a(k)$, and all corresponding disturbances $d$ and $\tilde d$ and initial states $x(0)$ and $\tilde x(0)$, we have  $\tilde y(k)\not \rightarrow y(k)$, as $k\rightarrow \infty$.
\end{itemize}
\end{defn}

Identifiable attack signals $a$ excite the output $y$ in a unique way that cannot be achieved by any other attack $\tilde a$.
This is equivalent to the system possessing a certain left inverse, as will be explored in Section~\ref{sec:decouple}.
Note that identifiability of $a$ is a much stronger requirement than detectability of $a$ (which means that the attack $a$ is such that $y\neq 0$ for all disturbances $d$ and initial states $x(0)$). Since identifiability is such a strong requirement, we will also be interested in the following weaker notion.

\begin{defn}
A (persistent) attack signal $a$ is
\begin{itemize}
\item[(i)] $i$-\emph{identifiable} if for all attack signals $\tilde a$ with $\tilde a_i\neq a_i$, and all corresponding disturbances $d$ and $\tilde d$ and initial states $x(0)$ and $\tilde x(0)$,
we have $\tilde y \neq y$;
\item[(ii)] \emph{asymptotically} $i$-\emph{identifiable} if for all attack signals $\tilde a$ with $\tilde a_i(k)\not\rightarrow a_i(k)$, and all corresponding disturbances $d$ and $\tilde d$  and initial states $x(0)$ and $\tilde x(0)$, we have  $\tilde y(k)\not \rightarrow y(k)$, as $k\rightarrow \infty$.
\end{itemize}
\end{defn}

This notion is weaker than identifiability since an attack $a$ can be $i$-identifiable even if there exists
another attack $\tilde a \neq a$, with $a_i=\tilde a_i$, such that $y=\tilde y$.
Hence, $i$-identifiability concerns only the sensitivity of the output $y$ with respect to
 the $i$-th element in $a$. Identifiability is therefore the same as $i$-identifiability for all $i$.
 Interestingly, there is a tight connection between detectability, identifiability, and the
 previously introduced security index.

\begin{thm}
Suppose that the initial state $x(0)$ is unknown to the operator (and can take any value), and that the attacker can manipulate at most $q$
attack elements simultaneously ($\|a\|_0\leq q$).
\begin{itemize}
\item[(i)] There exists persistent undetectable attacks $a^i$ iff $q\geq \alpha_i$;
\item[(ii)] All persistent attacks are $i$-identifiable iff $q < \alpha_i/2$;
\item[(iii)] All persistent attacks are identifiable iff $q < \min_i \alpha_i/2$.
\end{itemize}
\label{thm:alpha}
\end{thm}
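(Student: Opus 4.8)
The plan is to derive everything from the Rosenbrock‑pencil characterization \eqref{eq:z0}--\eqref{eq:alpha} of undetectable attacks and to exploit linearity of \eqref{eq:G} to trade a pair of confusable attacks for a single undetectable one. For part (i), the ``if'' direction is immediate: a minimizer $(z_0,x_0,d_0,a_0^i)$ of \eqref{eq:alpha} has $|z_0|\ge 1$ and $\|a_0^i\|_0=\alpha_i\le q$, so $a^i(k)=z_0^k a_0^i$, realized by initial state $x_0$ and disturbance $z_0^k d_0$, is a persistent undetectable attack touching at most $q$ channels and exciting channel $i$. For ``only if'', given a persistent undetectable attack $a$ against channel $i$ with $\text{supp}(a)=S$, $|S|\le q$, I would restrict the realization to the attack channels in $S$ and argue that a non‑decaying output‑nulling input forces the restricted pencil to have an invariant zero $z_0$, $|z_0|\ge 1$, with a zero direction still excited in channel $i$; this is a feasible point of \eqref{eq:alpha} of support $\le|S|\le q$, so $\alpha_i\le q$. (If one reads ``attack signal'' as an exponential mode $z_0^k a_0$, as in the discussion preceding \eqref{eq:alpha}, part (i) is just the definition of $\alpha_i$ as a minimum; the work lies in justifying this reduction for arbitrary persistent signals, which is the step I expect to be delicate.)

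For parts (ii)--(iii), the key identity is that if $a$ and $\tilde a$ generate the same output $y=\tilde y$ with respective disturbances and initial states, then $\hat a:=a-\tilde a$ is an undetectable attack, masked by $\hat d:=d-\tilde d$ from initial state $x(0)-\tilde x(0)$, with $\|\hat a\|_0\le\|a\|_0+\|\tilde a\|_0\le 2q$ and $\hat a_i\neq 0$ exactly when $a_i\neq\tilde a_i$. For the ``if'' direction of (ii): were some persistent attack not $i$‑identifiable, such a pair would exist and $\hat a$ would be a persistent undetectable attack against channel $i$ with $\|\hat a\|_0\le 2q$, whence part (i) gives $\alpha_i\le 2q$, contradicting $q<\alpha_i/2$. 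For the ``only if'' direction: if $\alpha_i\le 2q$, take the exponential certificate $z_0^k a_0^i$ from \eqref{eq:alpha} with $\|a_0^i\|_0=\alpha_i$, partition $\text{supp}(a_0^i)$ into $S_1\ni i$ and $S_2$ with $|S_1|,|S_2|\le q$, and set $a(k):=z_0^k (a_0^i)|_{S_1}$ (persistent, since $i\in S_1$, and $q$‑sparse) and $\tilde a(k):=-z_0^k (a_0^i)|_{S_2}$; choosing $d,x(0)$ as the masking pair for $a_0^i$ and $\tilde d=0$, $\tilde x(0)=0$ makes $y=\tilde y$ while $\tilde a_i\neq a_i$, so $a$ is a persistent attack that is not $i$‑identifiable.

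Part (iii) then follows from (ii) and the equivalence, already noted in the text, that $a$ is identifiable iff it is $i$‑identifiable for every $i$: ``all persistent attacks identifiable'' $\iff$ ``for every $i$, all persistent attacks are $i$‑identifiable'' $\iff$ ``$q<\alpha_i/2$ for all $i$'' $\iff$ ``$q<\min_i\alpha_i/2$.'' The single point I expect to cost real effort is the one flagged in part (i): certifying that a persistent undetectable attack against channel $i$ can be replaced by an exponential‑mode certificate of \eqref{eq:alpha} of no larger support that still excites channel $i$. Once this correspondence between undetectability and $\alpha_i$ is in place, parts (ii)--(iii) reduce to the subtraction identity and elementary support bookkeeping.
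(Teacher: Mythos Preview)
Your approach matches the paper's: part (i) is treated as immediate from the definition of $\alpha_i$; part (ii) is proved in the forward direction by the subtraction identity $\hat a=a-\tilde a$ and in the converse by splitting the support of an optimal undetectable attack into two halves of size at most $q$; and part (iii) follows from (ii). The paper does not spell out the reduction you flag---replacing a general persistent undetectable attack against channel $i$ by an exponential-mode certificate in \eqref{eq:alpha} of no larger support---and simply invokes ``contradiction to the optimality of $\alpha_i$'' directly, so your version is slightly more careful on that point but otherwise identical in structure.
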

\begin{proof}
(i): Follows directly from the definition of $\alpha_i$, where we pick $x(0)=x_0$.
(ii): Consider first two attacks $a$ and $\tilde a$, both of cardinality $q<\alpha_i/2$, such that $a_i\neq \tilde a_i$. Let $y=G_d d +G_a a$ and $\tilde y = G_d \tilde d + G_a \tilde a$ and suppose that $y = \tilde y$, in contradiction to the theorem. This implies that $0 = G_d (d-\tilde d)+ G_a (a-\tilde a)$. Since $a_i\neq \tilde a_i$, the attack signal $a-\tilde a$ would constitute an undetectable attack against element $i$. Furthermore,
the cardinality of this signal is strictly smaller than $\alpha_i/2 + \alpha_i/2$, which is a contradiction to the optimality of security index $\alpha_i$. Hence, we must have $\tilde y \neq y$, and the attack $a$ is $i$-identifiable. Conversely, assume that
$q\geq \alpha_i/2$ and let us construct two attacks $a$ and $\tilde a$ that are not $i$-identifiable. Suppose first that $\alpha_i$ is even and that $q=\alpha_i/2$. There exists an undetectable attack $a^\star$, targeting element $i$, with support in an index set $I$, $|I|=\alpha_i$. Thus $0 = G_d d^\star + G_a a^\star$.
Let us split $I$ into two disjoint sets, $J$ and $K$ of equal size, $I=J\cup K$, $|J|=|K|=\alpha_i/2$. In a corresponding manner we can make the split $a^\star = a - \tilde a$, where $a$ and $\tilde a$ have support in $J$ and $K$, respectively. It is now clear that
$0 \neq  y = G_d d^\star + G_a a = G_a \tilde a = \tilde y$, and since $a_i\neq \tilde a_i$ this is an example of a non $i$-identifiable attack $a$. A similar argument can be applied when $\alpha_i$ is odd, concluding the proof.
(iii): Follows since identifiability is the same as $i$-identifiability for all $i$.
\end{proof}

In some cases it may be more realistic to assume that the operator actually knows the
initial state of the system~\eqref{eq:G}. We can then state the following corollary to the above theorem, which
applies in the asymptotic limit when $k\rightarrow \infty$.

\begin{cor}
Suppose that $A$ is Schur, that the initial state $x(0)$ is known to the operator, and that the attacker can manipulate at most $q$
attack elements simultaneously ($\|a\|_0\leq q$).
\begin{itemize}
\item[(i)] There exists persistent asymptotically undetectable attacks $a^i$ iff $q\geq \alpha_i$;
\item[(ii)] All persistent attacks are asymptotically $i$-identifiable iff $q < \alpha_i/2$;
\item[(iii)] All persistent attacks are asymptotically identifiable iff $q < \min_i \alpha_i/2$.
\end{itemize}
\label{cor:q}
\end{cor}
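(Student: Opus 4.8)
The plan is to reduce Corollary~\ref{cor:q} to Theorem~\ref{thm:alpha} via one structural observation: since $A$ is Schur, the free response $CA^k\xi$ of \eqref{eq:G} from any state $\xi$ tends to $0$, so the gap between \emph{known} and \emph{free} $x(0)$ is always an asymptotically vanishing transient, and the extremal objects remain the persistent modal signals $z_0^k a_0^i$, $|z_0|\ge 1$, that define $\alpha_i$ in \eqref{eq:alpha}. I would first isolate two facts. \emph{Fact~A:} if $(z_0,x_0,d_0,a_0^i)$ is feasible for \eqref{eq:alpha}, then the persistent attack $a^i(k)=z_0^k a_0^i$ together with $d(k)=z_0^k d_0$ produces, from the \emph{actual} initial state $x(0)$, the output $y(k)=CA^k\!\left(x(0)-x_0\right)\to 0$; this follows since $x(k)=z_0^kx_0+A^k(x(0)-x_0)$ and the last block of \eqref{eq:z0} gives $Cz_0^kx_0+D_dz_0^kd_0+D_az_0^ka_0^i=0$. \emph{Fact~B:} if $a$ is persistent, $a_i\neq 0$, $\|a\|_0\le q$, and some disturbance $d$ yields $y(k)\to 0$, then $\alpha_i\le q$; the idea is that restricting \eqref{eq:G} to the disturbance channels and the active attack channels $S=\mathrm{supp}(a)$ gives a subsystem which, were it strongly minimum phase, would (because $A$ is Schur and by the full-rank conditions of Section~\ref{sec:prel}) force the input to vanish whenever $y(k)\to 0$, contradicting persistence; hence this subsystem has an invariant zero $z_0$ with $|z_0|\ge 1$ whose zero direction solves \eqref{eq:z0} with attack part supported on $S$ and nonzero $i$-th component, so it is feasible for \eqref{eq:alpha} with cost $\le q$.

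Given Facts~A and~B, part~(i) is immediate: ``if'' is Fact~A, and ``only if'' is the contrapositive of Fact~B. For part~(ii), the ``if'' direction ($q<\alpha_i/2$): given persistent $a,\tilde a$ with $\|a\|_0,\|\tilde a\|_0\le q$, $a_i\neq\tilde a_i$, assume $\tilde y(k)-y(k)\to 0$ for some $d,\tilde d,x(0),\tilde x(0)$; subtracting the trajectories and discarding the initial-condition contributions (which decay since $A$ is Schur), $a-\tilde a$ is persistent, has nonzero $i$-th component, satisfies $\|a-\tilde a\|_0\le 2q<\alpha_i$, and is asymptotically undetectable, contradicting Fact~B, so $a$ is asymptotically $i$-identifiable.

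The ``only if'' direction of part~(ii) ($q\ge\alpha_i/2$) reuses the splitting construction from the proof of Theorem~\ref{thm:alpha}(ii): take a modal undetectable attack $a^\star(k)=z_0^ka_0^\star$ against $i$ with $\|a_0^\star\|_0=\alpha_i$ and masking disturbance $d^\star(k)=z_0^kd_0^\star$, split $\mathrm{supp}(a_0^\star)=J\cup K$ with $i\in J$ and $|J|,|K|\le q$, and set $a=a^\star|_J$, $\tilde a=-a^\star|_K$ (taking $\tilde a=0$ when $\alpha_i=1$); with disturbances $d^\star$ and $0$ the two outputs differ only by a term $CA^k(\cdot)\to 0$, so $a$ is persistent but not asymptotically $i$-identifiable. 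Part~(iii) then follows because, as noted just before Theorem~\ref{thm:alpha}, identifiability coincides with $i$-identifiability for all $i$, and $\min_i\alpha_i/2$ is the binding threshold over $i$.

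The only genuinely delicate step is Fact~B --- extracting from a general persistent, asymptotically-undetectable, low-cardinality attack a true modal solution of \eqref{eq:z0} with the same support and nonzero $i$-th component. Morally this is the statement that, for Schur $A$, a subsystem cannot be strongly minimum phase while admitting a non-decaying input with decaying output, so it must have an invariant zero on or outside the unit circle; making this rigorous requires the standard characterization of (non-)minimum-phase behaviour through invariant zeros and zero directions, and one must invoke the rank conditions imposed on \eqref{eq:G} in Section~\ref{sec:prel} to ensure the zero direction can be chosen with the required support and nonzero $i$-th entry. This is also the only place where the hypothesis ``$A$ Schur'' is actually used; everything else is bookkeeping with transients that vanish by Schurness.
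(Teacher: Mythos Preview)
Your high-level strategy matches the paper's exactly: the paper's proof of Corollary~\ref{cor:q} is a two-line reduction to Theorem~\ref{thm:alpha}, observing that with $x(0)$ fixed and $A$ Schur the only change is an additive transient $y_{\text{trans}}(k)=CA^k(x(0)-x_0)\to 0$, after which Theorem~\ref{thm:alpha} applies verbatim. Your Fact~A and the splitting construction for part~(ii) are precisely this.

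Where you diverge is in Fact~B. The paper does not prove anything like Fact~B; in both Theorem~\ref{thm:alpha}(i) and~(ii) it simply asserts that a general persistent undetectable attack of cardinality below $\alpha_i$ ``contradicts the optimality of the security index $\alpha_i$,'' even though $\alpha_i$ in \eqref{eq:alpha} is defined only over \emph{modal} signals $z_0^ka_0$. You have correctly spotted that this step is not literally immediate from the definition, and you attempt to bridge it via a minimum-phase/invariant-zero argument. That is a more honest accounting than the paper gives.

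That said, your Fact~B argument is not complete. Granting that the restricted subsystem must fail to be strongly minimum phase and hence possesses an invariant zero $z_0$ with $|z_0|\ge 1$, you still need the corresponding zero direction to have \emph{nonzero $i$-th attack component}. You appeal to the full-column-rank conditions on $[B_d;D_d]$ and $[B_a;D_a]$ from Section~\ref{sec:prel}, but those conditions only guarantee that distinct input channels are linearly independent; they do not by themselves force any particular coordinate of a zero direction to be nonzero. Without a further argument (for instance, a minimality/dimension argument comparing the restricted systems on $S$ and on $S\setminus\{i\}$), the zero direction you extract could in principle live entirely in the disturbance channels or in $S\setminus\{i\}$, and then it is not feasible for \eqref{eq:alpha}. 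This is the one genuine gap in your proposal; everything else is sound and, indeed, more careful than the paper's own treatment.
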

\begin{proof}
The only difference to the proof of Theorem~\ref{thm:alpha}
is that we need to add a transient term $y_\text{trans}(k) = CA^{k}(x(0)-x_0)$
 to all outputs, see~\cite{Teixeira_Allerton2012}. Here $x_0$ is an initial state rendering the relevant attack undetectable.
  Since $\rho(A)<1$ by assumption, this term decays to zero exponentially and the asymptotic results follow.
\end{proof}

We note that other papers have previously pointed out the connection
between detectability and identifiability of attacks, see, \eg \cite{pasqualetti+13}. The main contribution here is to introduce $i$-identifiability and show the relation to the security index $\alpha_i$.
As an example, assume that $\alpha_1 = 1$, $\alpha_2 = 3$, and that $q=1$. Then there will exist attacks against $a_1$ that are not visible in $y$, but all attacks against $a_2$ will not only be visible but also identifiable through $y$.
How to possibly conduct the identification is discussed next.

\subsection{Decoupling the Attacks from the Disturbances}
\label{sec:decouple}
To identify attacks $a$ in the output $y$, there are several useful results in the fault detection literature, see, \eg \cite{patton+99,ding08}. In particular, we will use a result on the existence of decoupling filters, which isolate the influence of the attack from that of the disturbance. A key result towards identification is the existence of certain left inverses.

\addtolength{\textheight}{-5.5cm}   % This command serves to balance the column lengths
                                  % on the last page of the document manually. It shortens
                                  % the textheight of the last page by a suitable amount.
                                  % This command does not take effect until the next page
                                  % so it should come on the page before the last. Make
                                  % sure that you do not shorten the textheight too much.

\begin{defn}
\label{def:leftinv}
Consider the linear system $y=Gu$ with $m$ inputs, $p$ outputs, and with realization
\begin{equation*}
\begin{aligned}
x(k+1) & = Ax(k)+Bu(k) \\
y(k) & = Cx(k) + Du(k).
\end{aligned}
\end{equation*}
Then
$G$ has a \emph{left inverse} when $y(k)=0$, $k\geq 0$, implies that $u(k)=0$, $k \geq 0$, provided $x(0)=0$.
\end{defn}

The following condition for existence of a left inverse is well known, see, \eg \cite{moylan77,hou+98}.
\begin{lem}
\label{lem:leftinv}
A linear system $G\in\mathcal{R}_p^{p\times m}(z)$
has a \emph{left inverse} iff $\normrank G(z)=m$.
\end{lem}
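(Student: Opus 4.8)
The plan is to prove both directions in the frequency domain, working in the field $\compl((z^{-1}))$ of formal Laurent series in $z^{-1}$, into which both the proper rational transfer matrices and the one-sided signals $u(k)$, $k\ge0$, embed (the latter via their $z$-transforms $U(z)=\sum_{k\ge0}u(k)z^{-k}$, and every rational function via its Laurent expansion at $z=\infty$). The one fact I will lean on is that, when $x(0)=0$, the input--output map is exactly multiplication by $G(z)$, \ie $Y(z)=G(z)U(z)$, which holds coefficient-by-coefficient because it is the Markov-parameter convolution $y(k)=\sum_{j=0}^k g_{k-j}u(j)$ with $g_0=D$ and $g_\ell=CA^{\ell-1}B$. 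Since $\compl((z^{-1}))$ is a field, matrix products over it are associative, and this is what makes the argument go through.

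For the ``only if'' direction I argue by contraposition. If $\normrank G(z)<m$, then the columns of $G(z)$ are linearly dependent over $\compl(z)$, so after clearing denominators there is a nonzero polynomial vector $v(z)\in\compl[z]^m$ with $G(z)v(z)=0$. Multiplying by $z^{-r}$ with $r$ at least the largest degree occurring in $v$ gives $\tilde v(z):=z^{-r}v(z)$, a nonzero vector of polynomials in $z^{-1}$ with $G(z)\tilde v(z)=0$; this $\tilde v$ is the $z$-transform of a nonzero, finitely supported signal $u$. Running the realization on $u$ with $x(0)=0$ yields $Y(z)=G(z)\tilde v(z)=0$, \ie $y(k)=0$ for all $k$, while $u\not\equiv0$ — so $G$ has no left inverse.

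For the ``if'' direction, suppose $\normrank G(z)=m$. Then $G(z)$, as a matrix over the field $\compl(z)$, has a rational (though possibly improper) left inverse $H(z)$ with $H(z)G(z)=I_m$; for instance $H=(SG)^{-1}S$, where $S$ selects $m$ linearly independent rows of $G(z)$. Now take any input $u$ giving $y(k)=0$, $k\ge0$, from $x(0)=0$. Then in $\compl((z^{-1}))^m$ we have
$U(z)=I_m U(z)=\bigl(H(z)G(z)\bigr)U(z)=H(z)\bigl(G(z)U(z)\bigr)=H(z)Y(z)=0$,
where the third equality is associativity of the product over $\compl((z^{-1}))$. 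Hence $u(k)=0$ for all $k$, so $G$ has a left inverse.

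The main obstacle — indeed the only delicate point — is legitimizing the computation in the ``if'' direction: $H(z)$ need not be proper, so it is not a causal filter and cannot be ``applied'' to the time signal directly; the fix is precisely the observation that $G$, $H$, $U$, $Y$ all live in the single field $\compl((z^{-1}))$, where $H(GU)=(HG)U$ is automatic. One should also record the routine checks that a rational function embeds into $\compl((z^{-1}))$ through its expansion at infinity and that $Y(z)=G(z)U(z)$ in that ring is exactly the Markov-parameter convolution above; both are standard and do not require $A$ to be stable or the realization to be minimal.
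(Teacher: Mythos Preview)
Your argument is correct. Note, however, that the paper does not actually prove this lemma: it is quoted as a well-known fact with citations to Moylan~(1977) and Hou--Patton~(1998), so there is no ``paper's own proof'' to compare against. Your self-contained argument via the field $\compl((z^{-1}))$ of formal Laurent series is sound; in particular, the point you flag---that the rational left inverse $H(z)$ may be improper and hence not realizable as a causal filter---is exactly the subtlety, and your resolution (working purely algebraically in $\compl((z^{-1}))$, where associativity of the matrix product holds and every rational function embeds via its expansion at $z=\infty$) is the right one.
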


From fault detection \cite{ding08}, it is known that if $G_d,G_a\in \mathcal{R}_{p}(z)$ and
\begin{equation}
\begin{aligned}
\normrank [G_d(z)] & = m', \\  \normrank [G_d(z)\,\, G_a(z)] & = m'+m'',
\end{aligned}
\label{eq:rankcond}
\end{equation}
then there exists a post-filter $R\in \mathcal{R}_{p}^{p\times p}(z)$ (of full normal rank) such that we can decouple the effects of the attacks and the disturbances in the following way:
\begin{equation}
\begin{bmatrix} r \\ y' \end{bmatrix} =R(G_d d+ G_a a) = \begin{bmatrix} 0 & \Delta \\ G_d' & G_a' \end{bmatrix} \begin{bmatrix} d \\ a \end{bmatrix},
\label{eq:decouple}
\end{equation}
where $\normrank [G_d'(z)]= \normrank [G_d'(z) \, G'_a(z)] = m'$ and $\normrank [\Delta(z)]=m''$. Note that if all attacks are undetectable in the sense of \eqref{eq:seriousrank}, then $m''=0$, and
$\Delta$ will be the empty matrix. On the other hand, if for some $i$, $\alpha_i>1$, then $m''>0$ and there is a non-trivial system $\Delta$.
The residual signal $r$ is only influenced by the attack $a$, and we can use it to detect and potentially identify $a$. Notice that for all attacks $a$ there exists a disturbance $d$ such that $0=y'=G_d'd+G_a'a$, so that $r=\Delta a$ is the only reliable source of information in regards to $a$. We have the following proposition on the relation between the measured output $y$ and
the filtered version $r$.

\begin{prop}
Let the initial state of the decoupling filter $R$ be chosen to $x_R(0)=0$. Suppose the initial state $x(0)$ is unknown to the operator (and can take any value), and that the attacker can manipulate at most $q$
attack elements simultaneously ($\|a\|_0\leq q$).
\begin{itemize}
\item[(i)] There exists persistent undetectable attacks $a^i$ in the signal $r$ iff $q\geq \alpha_i$;
\item[(ii)] All persistent attacks are $i$-identifiable in the signal $r$ iff $q < \alpha_i/2$;
\item[(iii)] All persistent attacks are identifiable in the signal $r$ iff $q < \min_i \alpha_i/2$.
\end{itemize}
\label{prop:R}
\end{prop}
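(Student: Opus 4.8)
The plan is to show that Proposition~\ref{prop:R} reduces to Theorem~\ref{thm:alpha} by establishing that the three statements about the residual signal $r$ are equivalent to the corresponding statements about the full output $y$. The key observation is that the post-filter $R$ in \eqref{eq:decouple} has full normal rank, so no information about the attack $a$ is lost in passing from $y$ to $\begin{bmatrix} r \\ y' \end{bmatrix}$; and, crucially, that the component $y'$ carries \emph{no} reliable information about $a$, since for every attack $a$ there is a disturbance $d$ with $G_d' d + G_a' a = 0$, i.e.\ $y' = 0$. Thus detectability and identifiability of $a$ "in $y$" (masking allowed via free $d$) are the same as the corresponding notions "in $r$" alone.

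First I would make precise the equivalence between zero output in $y$ and zero output in $r$. Because $x_R(0)=0$ and $R$ has full normal rank (hence a left inverse by Lemma~\ref{lem:leftinv}), $y(k)=0$ for all $k\geq 0$ (with $x(0)=x_0$ chosen appropriately, $x_R(0)=0$) is equivalent to $\begin{bmatrix} r(k) \\ y'(k)\end{bmatrix}=0$ for all $k$, which in particular gives $r(k)=0$. Conversely, if $r(k)=0$, then since $\normrank[G_d'(z)\ G_a'(z)] = m'$ there is a disturbance component producing $y'=0$ as well, and then running $R$ backwards (it has a left inverse) recovers $y=0$. Hence an attack $a^i$ is undetectable in $r$ (there exist masking $d$, initial states $x(0),x_R(0)=0$ giving $r\equiv 0$) precisely when it is undetectable in $y$ in the sense of Theorem~\ref{thm:alpha}(i). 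Statement (i) of the proposition then follows directly from Theorem~\ref{thm:alpha}(i).

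For (ii) and (iii) I would argue identically to the proof of Theorem~\ref{thm:alpha}, but tracking $r$ instead of $y$. Given two attacks $a,\tilde a$ of cardinality at most $q$ with $a_i\neq\tilde a_i$, the linearity of $R$ and of the plant gives $r-\tilde r = \Delta(a-\tilde a) + (\text{terms that can be cancelled by choice of }d,\tilde d)$; equivalently $r=\tilde r$ forces $0 = G_d(d-\tilde d) + G_a(a-\tilde a)$ after inverting $R$ and using that $y'$ can always be zeroed, so $a-\tilde a$ would be an undetectable attack against element $i$ of cardinality $< \alpha_i$ when $q<\alpha_i/2$ — a contradiction, so $r\neq\tilde r$ and $a$ is $i$-identifiable in $r$. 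For the converse, when $q\geq\alpha_i/2$ I would take the minimal undetectable attack $a^\star$ against $i$ with support $I$, $|I|=\alpha_i$, split it as $a^\star = a - \tilde a$ with disjoint supports of size $\lceil\alpha_i/2\rceil$ and $\lfloor\alpha_i/2\rfloor$, and note $0\neq r = \Delta a = \Delta\tilde a = \tilde r$ while $a_i\neq\tilde a_i$ (here one uses that $a^\star$ is detectable as an attack after decoupling, i.e.\ $\Delta a^\star \neq 0$ is impossible only if $\alpha_i$ were achieved trivially — but $r=\Delta a^\star$ must vanish because $a^\star$ is undetectable, so actually one splits so that $\Delta a = -\Delta\tilde a$ is nonzero, mirroring the theorem's argument with $y$ replaced by $r$). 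Statement (iii) again follows since identifiability is $i$-identifiability for all $i$.

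The main obstacle I anticipate is the careful bookkeeping in the converse directions of (ii): one must verify that the split $a^\star = a-\tilde a$ yields $r = \Delta a$ that is genuinely nonzero (so that $a$ is a \emph{detectable} — just not $i$-identifiable — attack), and that the initial-state freedom of the plant together with $x_R(0)=0$ does not spoil the construction. This requires noting that $\Delta a = \Delta a^\star + \Delta\tilde a = 0 + \Delta\tilde a$ is forced, and separately arguing $\Delta\tilde a \neq 0$ using that $\tilde a$ alone (support $K$, cardinality $<\alpha_i$) is \emph{not} an undetectable attack against any element of its support — otherwise $\alpha_j < \alpha_i$ for some such $j$, which is fine, but we specifically need it detectable, which follows because its cardinality is too small to be undetectable against element $i$ and $a_i^\star\neq 0$ so $\tilde a_i\neq a_i$ cannot both be cancelled. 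Apart from this, everything is a transcription of the Theorem~\ref{thm:alpha} proof through the full-normal-rank filter $R$.
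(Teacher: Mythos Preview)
Your overall strategy is the same as the paper's: use that $R$ has full normal rank (hence, by Lemma~\ref{lem:leftinv} and Definition~\ref{def:leftinv} with $x_R(0)=0$, $Ry\equiv 0 \iff y\equiv 0$) together with the fact that for any attack $a$ there is a disturbance making $y'=0$, to conclude that undetectability and ($i$-)identifiability of $a$ in $r$ coincide with the same notions in $y$; then invoke Theorem~\ref{thm:alpha}. The paper's proof is exactly these two observations and nothing more.

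Where you diverge is in parts (ii) and (iii): having already argued that the detectability/identifiability notions in $r$ and in $y$ are equivalent, you nevertheless go on to \emph{re-run} the proof of Theorem~\ref{thm:alpha} at the level of $r=\Delta a$. This is unnecessary and is precisely the source of the confusion you flag as your ``main obstacle'' (whether $\Delta a$ is nonzero after splitting $a^\star$, sign issues $\Delta a = \pm\Delta\tilde a$, etc.). Once the equivalence of the two settings is established, Theorem~\ref{thm:alpha} applies verbatim and there is nothing further to verify about $\Delta$. So your proposal is correct in spirit, but the cleanest route---and the paper's---is to stop after the reduction and simply cite Theorem~\ref{thm:alpha}; your third and fourth paragraphs can be deleted.
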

\begin{proof}
Recalling that $R$ has full normal rank, we can use Lemma~\ref{lem:leftinv} and Definition~\ref{def:leftinv} to conclude that
 $Ry=0$ is equivalent to $y=0$. Since there is always a $d$ such that $y'=0$ in \eqref{eq:decouple},
the undetectability and identifiability properties of $a$ in $y=G_d d+G_a a$ must
carry over to the relation $r=\Delta a$, to which we can apply Theorem~\ref{thm:alpha}.
\end{proof}

If we suppose that $q<\min_i \alpha_i/2$, all persistent attacks are identifiable.
 A procedure to identify $a$ could include the following steps (we leave the details for future work):
 First apply the post-filter $R$ to $y$ to obtain the relation $r = \Delta a$. The initial state $x(0)$ is unknown, and could cause a non-zero transient in $r$ even in the absence of an attack $a$. However, the dynamics of the transients are known, and can be filtered out from $r$ to obtain a new transient-free residual $r'$. The signal $r'$ is identically zero if $r$ can be completely explained by a transient $y_\text{trans}(k) = CA^{k}x(0)$.
Undetectable attacks could also be also "hiding" in the transient, and by forming $r'$ the visible effects of such possible attacks also disappear. However, since we know that $q<\min_i \alpha_i$, there are no such persistent attacks affecting $a$, and so to identify $a$ we can equally well use the relation $r' = \Delta a$, where the initial state of $\Delta$ is zero, $x_\Delta(0)=0$.
To find $a$, we can form the systems $\Delta_I:=[\Delta_i]_{i\in I}$ out of the columns $\Delta_i$ of $\Delta$, for all subsets
$|I|\leq q$, $I\subseteq \{1,\ldots,m\}$. Since all attacks are identifiable, these $\Delta_I$ are left invertible, and give each rise to an attack estimation $\hat a_I$. From identifiability of $a$ it follows that any estimate $\hat a_I$
satisfying $r'=\Delta \hat a_I$ is actually equal to the real persistent attack $a$, which concludes the procedure.

Note that the real bottleneck here is the number of systems $\Delta_I$ that need to be formed and inverted. The problem is in fact essentially the same as in compressed sensing, see, \eg \cite{donoho+03}. Finally, we remark that the procedure can be modified to handle attacks that are only $i$-identifiable, but the estimates $\hat a_I$ will then only necessarily correctly identify element $a_i$.

\section{Conclusion}
\label{sec:conclusion}
In this paper, we have studied detectability and identifiability of attacks on dynamical systems that are also subjected
to disturbances. For this purpose, we generalized the concept of security index, which was previously introduced for
static systems in~\cite{sandberg+10}. In particular, the index exactly quantifies the resources necessary for targeted attacks to be undetectable and unidentifiable in the presence of disturbances.
Such information is relevant for both risk assessment and for the design of anomaly detectors.
We also discussed how these concepts relate to recent other work on attack detection and identification. Finally, we showed how techniques from the fault detection literature can be exploited to identify attacks under certain sparsity constraints.

\bibliographystyle{IEEEtran}
%\bibliography{SoSCYPS-refs}

\begin{thebibliography}{10}
\providecommand{\url}[1]{#1}
\csname url@samestyle\endcsname
\providecommand{\newblock}{\relax}
\providecommand{\bibinfo}[2]{#2}
\providecommand{\BIBentrySTDinterwordspacing}{\spaceskip=0pt\relax}
\providecommand{\BIBentryALTinterwordstretchfactor}{4}
\providecommand{\BIBentryALTinterwordspacing}{\spaceskip=\fontdimen2\font plus
\BIBentryALTinterwordstretchfactor\fontdimen3\font minus
  \fontdimen4\font\relax}
\providecommand{\BIBforeignlanguage}[2]{{%
\expandafter\ifx\csname l@#1\endcsname\relax
\typeout{** WARNING: IEEEtran.bst: No hyphenation pattern has been}%
\typeout{** loaded for the language `#1'. Using the pattern for}%
\typeout{** the default language instead.}%
\else
\language=\csname l@#1\endcsname
\fi
#2}}
\providecommand{\BIBdecl}{\relax}
\BIBdecl

\bibitem{teixeira+15}
A.~Teixeira, K.~C. Sou, H.~Sandberg, and K.~H. Johansson, ``Secure control
  systems: A quantitative risk management approach,'' \emph{Control Systems,
  IEEE}, vol.~35, no.~1, pp. 24--45, Feb 2015.

\bibitem{ding08}
S.~Ding, \emph{Model-based Fault Diagnosis Techniques: Design Schemes,
  Algorithms, and Tools}.\hskip 1em plus 0.5em minus 0.4em\relax Springer
  Berlin Heidelberg, 2008.

\bibitem{patton+99}
J.~Chen and R.~J. Patton, \emph{Robust Model-Based Fault Diagnosis for Dynamic
  Systems}.\hskip 1em plus 0.5em minus 0.4em\relax Kluwer Academic Publishers,
  1999.

\bibitem{cardenas+08}
A.~C{\'a}rdenas, S.~Amin, and S.~Sastry, ``Research challenges for the security
  of control systems.'' in \emph{3rd USENIX Workshop on Hot Topics in
  Security}, 2008.

\bibitem{Teixeira_Automatica2015}
A.~{Teixeira}, I.~{Shames}, H.~{Sandberg}, and K.~H. {Johansson}, ``A secure
  control framework for resource-limited adversaries,'' \emph{Automatica},
  vol.~51, no.~1, pp. 135--148, 2015.

\bibitem{pasqualetti+13}
F.~Pasqualetti, F.~D{\"o}rfler, and F.~Bullo, ``Attack detection and
  identification in cyber-physical systems,'' \emph{IEEE Transactions on
  Automatic Control}, vol.~58, no.~11, pp. 2715--2729, Nov 2013.

\bibitem{smith+15}
R.~S. Smith, ``Covert misappropriation of networked control systems: Presenting
  a feedback structure,'' \emph{IEEE Control Systems}, vol.~35, no.~1, pp.
  82--92, Feb 2015.

\bibitem{Teixeira_Allerton2012}
A.~Teixeira, I.~Shames, H.~Sandberg, and K.~H. Johansson, ``Revealing stealthy
  attacks in control systems,'' in \emph{50th Annual Allerton Conference on
  Communication, Control, and Computing}, 2012.

\bibitem{sandberg+10}
H.~Sandberg, A.~Teixeira, and K.~H. Johansson, ``On security indices for state
  estimators in power networks,'' in \emph{First Workshop on Secure Control
  Systems (SCS), Stockholm}, 2010.

\bibitem{sou+12}
K.~C. Sou, H.~Sandberg, and K.~H. Johansson, ``Computing critical {$k$}-tuples
  in power networks,'' \emph{Power Systems, IEEE Transactions on}, vol.~27,
  no.~3, pp. 1511--1520, Aug 2012.

\bibitem{hendrickx+14}
J.~Hendrickx, K.~H. Johansson, R.~Jungers, H.~Sandberg, and K.~C. Sou,
  ``Efficient computations of a security index for false data attacks in power
  networks,'' \emph{Automatic Control, IEEE Transactions on}, vol.~59, no.~12,
  pp. 3194--3208, Dec 2014.

\bibitem{kosut14}
O.~Kosut, ``Max-flow min-cut for power system security index computation,'' in
  \emph{Sensor Array and Multichannel Signal Processing Workshop (SAM), 2014
  IEEE 8th}, June 2014, pp. 61--64.

\bibitem{yamaguchi+15}
Y.~Yamaguchi, A.~Ogawa, A.~Takeda, and S.~Iwata, ``Cyber security analysis of
  power networks by hypergraph cut algorithms,'' \emph{IEEE Transactions on
  Smart Grid}, vol.~6, no.~5, pp. 2189--2199, Sept 2015.

\bibitem{fawzi+14}
H.~Fawzi, P.~Tabuada, and S.~Diggavi, ``Secure estimation and control for
  cyber-physical systems under adversarial attacks,'' \emph{IEEE Transactions
  on Automatic Control}, vol.~59, no.~6, pp. 1454--1467, June 2014.

\bibitem{sundaram+11}
S.~Sundaram and C.~Hadjicostis, ``Distributed function calculation via linear
  iterative strategies in the presence of malicious agents,'' \emph{IEEE
  Transactions on Automatic Control}, vol.~56, no.~7, pp. 1495--1508, july
  2011.

\bibitem{chen+15}
Y.~Chen, S.~Kar, and J.~M.~F. Moura, ``Cyber-physical systems: Dynamic sensor
  attacks and strong observability,'' in \emph{Acoustics, Speech and Signal
  Processing (ICASSP), 2015 IEEE International Conference on}, April 2015, pp.
  1752--1756.

\bibitem{teixeira+13}
A.~Teixeira, K.~C. Sou, H.~Sandberg, and K.~H. Johansson, ``Quantifying
  cyber-security for networked control systems,'' in \emph{Control of
  Cyber-Physical Systems}, ser. Lecture Notes in Control and Information
  Sciences, D.~C. Tarraf, Ed.\hskip 1em plus 0.5em minus 0.4em\relax Springer
  International Publishing, 2013, vol. 449, pp. 123--142.

\bibitem{ZDG96}
K.~Zhou, J.~C. Doyle, and K.~Glover, \emph{Robust and Optimal Control}.\hskip
  1em plus 0.5em minus 0.4em\relax Upper Saddle River, NJ, USA: Prentice-Hall,
  Inc., 1996.

\bibitem{donoho+03}
D.~L. Donoho and M.~Elad, ``Optimally sparse representation in general
  (nonorthogonal) dictionaries via {$\ell^1$} minimization,'' \emph{Proceedings
  of the National Academy of Sciences}, vol. 100, no.~5, pp. 2197--2202, 2003.

\bibitem{lee+15}
C.~Lee, H.~Shim, and Y.~Eun, ``Secure and robust state estimation under sensor
  attacks, measurement noises, and process disturbances: Observer-based
  combinatorial approach,'' in \emph{Control Conference (ECC), 2015 European},
  July 2015, pp. 1872--1877.

\bibitem{sou+13}
K.~C. Sou, H.~Sandberg, and K.~H. Johansson, ``On the exact solution to a smart
  grid cyber-security analysis problem,'' \emph{IEEE Transactions on Smart
  Grid}, vol.~4, no.~2, pp. 856--865, 2013.

\bibitem{moylan77}
P.~Moylan, ``Stable inversion of linear systems,'' \emph{IEEE Transactions on
  Automatic Control}, vol.~22, no.~1, pp. 74--78, Feb 1977.

\bibitem{hou+98}
M.~Hou and R.~Patton, ``Input observability and input reconstruction,''
  \emph{Automatica}, vol.~34, no.~6, pp. 789 -- 794, 1998.

\end{thebibliography}
% Generated by IEEEtran.bst, version: 1.14 (2015/08/26)

\end{document}